\declaretheorem[parent=section,name=Theorem]{thm}
\declaretheorem[sibling=thm,name=Lemma]{lem}
\declaretheorem[style=definition,sibling=thm]{definition}
\declaretheorem[sibling=thm]{proposition}
\begin{document}
\title{Generalising Aumann's Agreement Theorem}

\author{Matthew Leifer}
\affiliation{Schmid College of Science and Technology, Chapman University, One University Dr., Orange, CA 92866, USA}
\affiliation{Institute for Quantum Studies, Chapman University, One University Dr., Orange, CA 92866, USA}
\author{Cristhiano Duarte}
\email[Corresponding author: ]{cristhianoduarte@gmail.com}
\affiliation{Schmid College of Science and Technology, Chapman University, One University Dr., Orange, CA 92866, USA}
\affiliation{Institute for Quantum Studies, Chapman University, One University Dr., Orange, CA 92866, USA}
%\affiliation{Fundação Maurício Grabois, R. Rego Freitas, 192 - República, São Paulo - SP, 01220-010, Brazil}
\affiliation{Instituto de Física, Universidade Federal da Bahia, Campus de Ondina, Rua Barão do Geremoabo, s.n., Ondina, Salvador, BA 40210-340, Brazil}%
\affiliation{Universidade Federal de Juiz de Fora, Departamento de Física, Juiz de Fora, MG, Brasil}

\date{\today}% It is always \today, today,
             %  but any date may be explicitly specified

\begin{abstract}

According to Aumann's celebrated theorem, rational agents cannot agree to disagree. In other words, agents who once shared a common prior probability distribution and who have common knowledge about their posteriors cannot assign different probability distributions to a given proposition. Common knowledge imposes strong restrictions on assigned probabilities. In fact, Aumann's agreement theorem was one of the first attempts to formalise and explore the role played by common knowledge in decision theory. Recently, the debate over possible (quantum) extensions of Aumann's results has resurfaced. This paper contributes to this discussion. First, we argue that agreeing to disagree is impossible in quantum theory. Secondly, by building on the quantum argument, we show that agreeing to disagree is also forbidden in any generalised probability theory. The upshot is that in its probabilistic version, the agreement theorem is a direct consequence of how we choose to condition upon acquiring new information.

\end{abstract}

\maketitle

%%%%%%%%%%%%%%%%%%%%%%%%%%%%%%%%
%       Introduction            %
%%%%%%%%%%%%%%%%%%%%%%%%%%%%%%%%

\section{Introduction}\label{Sec.Introduction}

Is it possible to agree to disagree? If we consider the original version of J. R. Aumann's intriguing and seminal result~\cite{Aumann76}, the answer is a resounding: \textit{'no, it is not!'}. According to Aumann's theorem, whenever a family of agents reach common knowledge about the description of an event, there is no escape; they all have to agree with each other about the description of that event---provided they started from the same prior. We will unpack the assumptions and hypotheses underlying the theorem in the subsequent sections. For now, it suffices to keep the simpler and condensed version in mind: rational agents cannot agree to disagree.

Aumann's impossibility theorem is provocative for various reasons. Firstly, it explores the close connection between two intricate, overlapping, and paradigmatic concepts in epistemology: knowledge~\cite{Pritchard13} and relativism~\cite{Pritchard19, Rovelli18, BJ21}. Essentially, the theorem states that a strong notion of shared knowledge within a group implies that agents in that group must paint their world with the same colours. Some strands of relativism are immediately ruled out whenever common knowledge holds true---because there cannot be rational agreements on disagreements. Secondly, the impossibility theorem opens up the possibility of distinguishing rational agents from non-rational agents. For example, in more dynamical derivations of Aumann's theorem, where public announcements play a fundamental role \cite{GP82,Demey14,Demey14b,DHK08}, after finitely many rounds of truthful exchange of information, if the agents are still in disagreement, it may be the case that at least one of them is not acting rationally~\footnote{This may explain why online debates can go on for a long time with no hint of agreement to be found.}. In this sense, rather than using a betting system~\cite{Vineberg16}, one could, in theory, use a collective property to define rationality normatively. Finally, Aumann's theorem may only apply to the world because classical probability provides a coarse-grained representation of nature.  More general probabilistic theories may introduce additional possibilities for agents to disagree with one another.  In this vein, it is worth recalling that quantum theory can be viewed as a generalisation of classical probability theory ~\cite{Wilce24} and that "entanglement is a trick quantum magicians use to produce phenomena that classical magicians cannot imitate"~\cite{Bruss02}.

In this work, we explore extensions to the impossibility of agreeing to disagree in hybrid, post-classical scenarios. We demonstrate that Aumann's theorem is not exclusive to classical probability theory but extends to any generalised probabilistic description of nature, provided that notions of knowledge and conditioning are well-defined. Our hybrid scenarios mix post-classical descriptions with a classical definition for (propositional) knowledge---what is required for an agent to know that something. As will become clear later, in such scenarios, it is the set-theoretical notion of common knowledge as well as our notion of conditioning that do all the heavy lifting in proving the impossibility theorem.  

There have been other attempts to capture the content of Aumann's agreement theorem in quantum and even in post-quantum frameworks before. In some of them, Aumann's theorem breaks apart and could, in principle, be regarded as a sign that quantum theory, seen as a theory of probabilistic assignments, is more resourceful than its classical counterpart~\cite{CTGAAP21,BrandenburgerEtAl24, Khrennikov15, KB14}. Contrary to those attempts, our results will point in the other direction. We will argue that for specific notions of conditioning and knowledge models, Aumann's impossibility theorem remains valid regardless of the toolbox rational agents use to describe their uncertainties.

Our manuscript is structured as follows. To facilitate the reading, in Sec.~\ref{Sec.OriginalArgument}, we review Aumann's original argument for the impossibility of agreeing to disagree. Although there are more modern approaches to Aumann's impossibility result, we decided to stick to the original argument, as it requires less effort on a first reading. Sec.~\ref{Sec.QuantumArgument} contains our first main result. There we prove that a hybrid quantum version of the agreement theorem also holds in a quantum-like scenario. We also introduce all the necessary elements for the proof in that section. Inspired by the quantum reformulation, we give a similar argument for Generalized Probability Theories (GPTs) in sec.~\ref{Sec.GPTArgument}, showing that Aumann's theorem can also be extended to GPTs. Sec.~\ref{Sec.Limitations} clarifies the main limitations and hypotheses we have used throughout this contribution.  In Sec.~\ref{Sec.Comparison}, we compare our findings with other similar results in the literature and hint at possible future works. We conclude our work in Sec.~\ref{Sec.Conclusion}. 

%We have also prepared dedicated appendices (appendices~\ref{App.GPTs} and~\ref{App.Conditioning} ) to give a more comprehensive introduction to Generalised Probability Theories and to elaborate on the major role conditioning plays in our results.

%%%%%%%%%%%%%%%%%%%%%%%%%%%%%%%%
%  Aumann's Original Argument   %
%%%%%%%%%%%%%%%%%%%%%%%%%%%%%%%%
\section{Aumann's Original Argument}\label{Sec.OriginalArgument}

This section reviews Aumann's original argument for the impossibility of agreeing to disagree, a fundamental result in decision theory. Our primary objective is to introduce the theorem and highlight the mathematical formalism involved in the agreement result, as it will be used repeatedly in the subsequent sections. We do not intend to provide a comprehensive account of the agreement theorem, nor do we want to present an in-depth discussion of all the subsequent works~\cite{RW90, BN97}, modifications~\cite{Samet10, Samet20, DR12} and debates~\cite{Aaronson05, Lederman15} that have followed up on Aumann's seminal work. We refer to~\cite{Demey14} for an in-depth overview of the topic. 

Aumann was responsible for one of the first attempts to rigorously explore the notion of common knowledge in decision theory. In plain words, Aumann's agreement theorem~\cite{Aumann76} says that whenever a group of agents had agreed in the past, if their current description of a proposition is common knowledge among them, then their descriptions must match each other's. Slightly more precisely, if two agents $A$ and $B$ have the same priors, it is impossible that it is common knowledge among the agents that $A$ assigns to some event a probability $a$, and $B$ assigns to the same event a probability $b$ with $a \neq b$. Statically, one can write down the main logical character of Aumann's impossibility theorem as follows: 
\begin{equation}
\mbox{[equalpriors}\land\mbox{C(posteriors)]} \rightarrow \mbox{equalposteriors}, 
\label{Eq.InformalLogicAgreement}
\end{equation}
where $C$ plays the role of a common knowledge operator. We will see in a minute that demanding common knowledge imposes strong restrictions on the set of events and consequently restricts the possible probability assignments available to the agents. 

Before we delve into the agreement theorem and its implications, as well as the generalisations we aim to make, it is crucial to establish some definitions first. To discuss knowledge and common knowledge, which are influential by-products of Aumann's work~\cite{Aumann76}, we need a solid translation of these concepts into a mathematical model. It will become clear that Aumann's theorem follows directly from this mathematical model for knowledge that we now describe. 

\begin{definition}[Knowledge Model]
A \emph{knowledge model} for $N \in \mathbb{N}$ agents is a structure $(\Omega, Q_1, Q_2,...,Q_N, \Sigma)$ consisting of
\begin{enumerate}
    \item [(i)] a non-empty and finite set $\Omega$ of states of the world,
    \item[(ii)] a partition $Q_i$ of $\Omega$ for each $i \in [N]$, and
    \item[(iii)]a $\sigma-$algebra $\Sigma$ over $\Omega$ that includes all the elements of $Q_1,Q_2,...,Q_N$.
\end{enumerate}
   We usually denote $Q_i(\omega)$, or $Q_iw$, as that unique element of the partition $Q_i$ containing $\omega \in \Omega$.
\label{Def.KnowledgeModel}   
\end{definition}

A knowledge model $(\Omega, Q_1, Q_2,...,Q_N, \Sigma)$ mimics a situation where $N$ individuals are about to learn the answer to various questions---possibly by observing the outcomes of tests, experiments, horse races and the like. The answers for agent $i$'s questions are codified in the partition $Q_i=\{Q_{i}^1,Q_{i}^2,...,Q_{i}^{M(i)}\}$ representing mutually exclusive, collectively exhaustive propositions. See fig.~\ref{Fig.KnowledgeModel}. 

\begin{figure}
    \centering
    \includegraphics[scale=0.25]{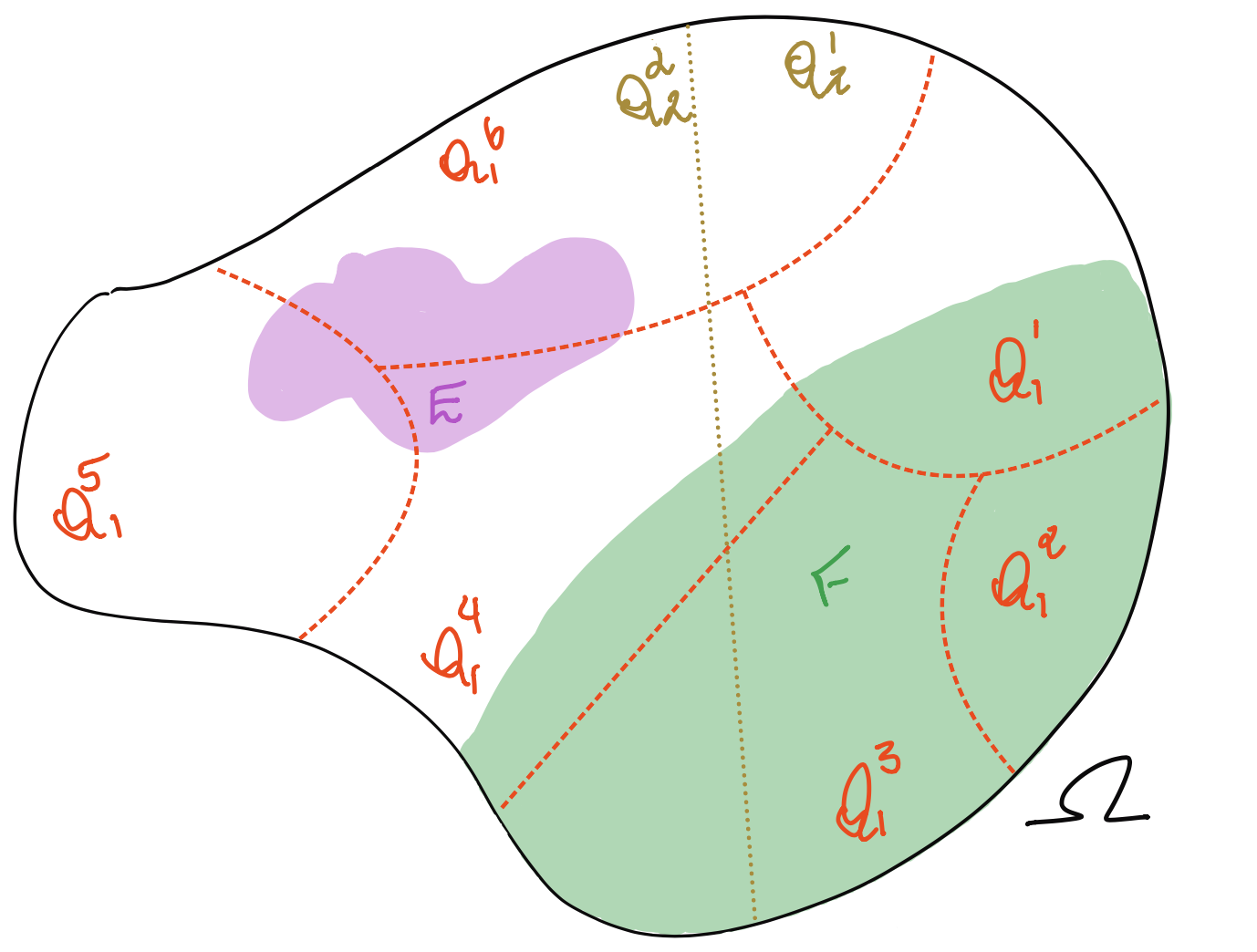}
    \caption{Knowledge model for two agents: yellow and orange (colours online). Regardless of the state of the world $\omega \in \Omega$, neither agent can know the purple $E$ event. Similarly, only the orange agent can know the green $F$ event.}
    \label{Fig.KnowledgeModel}
\end{figure}

\begin{definition}[Knowledge]
Let $(\Omega, Q_1, Q_2,...,Q_N, \Sigma)$ be a knowledge model. We say that an agent $i$ \emph{knows} $E \in \Sigma$ at $\omega$ whenever $Q_{i}(\omega) \subseteq E$. The $i$-th knowledge operator for this model is defined as
\begin{align}
    K_{i}:\Sigma &\rightarrow \Sigma \nonumber \\
     E &\mapsto K_{i}(E)=\{\omega \in \Omega| Q_{i}(\omega) \subseteq E\}. 
    \label{Eq.DefKnowledgeOperator}
\end{align}
\label{Def.Knowledge}
\end{definition}  

Considering the reasoning of only one agent, in a knowledge model, their knowledge about an event $E$ is the set of all possible states of the world where that agent knows $E$. Figure 1 schematically represents the knowledge model for two agents, Yellow and Orange. Because of their partition, neither agent can know the purple event $E$. On the other hand, only the Orange agent can know the event $F$—Yellow's partition is too coarse-grained for this agent to know (that) $F$.

Aumann's agreement theorem is about a more refined notion of knowledge that involves multiple agents: common knowledge. We will address common knowledge in a minute, but first, we need to define an infinite hierarchy of group or mutual knowledge~\footnote{In Kripke models, there is a non-equivalent way to introduce the notion of common knowledge among agents. It is beyond the scope of this manuscript to discuss the differences between the two definitions, as well as the subtleties associated with the latter involving Kripke models. We refer to~\cite{Demey14} for an introduction.}. Common knowledge will be the limiting case of this nested definition.   

\begin{definition}[Mutual and Common Knowledge]
Let $(\Omega, Q_1, Q_2,...,Q_N, \Sigma)$ be a knowledge model and $E$ an arbitrary event in $\Sigma$. The \emph{mutual knowledge of $m-$th degree} of $E$ is recursively defined as:
\begin{align}
      &\mbox{0-th}:M_{0}(E):=E \nonumber \\
      &\mbox{1-st}:M_1(E):=K_1(E)\cap K_{2}(E) \cap ... \cap K_{N}(E) \nonumber \\
      &\mbox{2-nd}:M_2(E):=K_1(M_1(E)) \cap ... \cap  K_{N}(M_1(E)) \nonumber \\  
     \vdots  \nonumber\\
     & m\mbox{-th}:M_{m}(E):= \bigcap_{i=1}^{N}K_{i}(M_{m-1}(E)). 
     \label{Eq.DefMutualKnowledge}
\end{align}
Similarly, \emph{common knowledge} is defined as 
\begin{equation}
    C(E):= \bigcap_{m \in \mathbb{N}}M_{m}(E).
    \label{Eq.DefCommonKnowledge}
\end{equation}
\label{Def.CommonKnowledge}
\end{definition}

For simplicity, consider the situation involving only two agents. In this case, mutual knowledge of second-order accounts for two situations. First, $A$ knows that she knows $E$, and she also knows that $B$ knows $E$. Second, and symmetrically, $B$ knows that he knows $E$ and also that he knows that $A$ knows $E$.  What is crucial here is how mutual knowledge (of a finite degree) differs from common knowledge. Common knowledge between two agents means that an infinite list of $A$ knows that $B$ knows that $A$ knows that $B$ knows that... and so on holds true.    

As thoroughly discussed in~\cite{Lederman15}, common knowledge is the key concept in Aumann's theorem. In the knowledge model we are dealing with here, the possibility of an event that the agents can commonly know puts strong constraints on the possible partitions allowed by the model. In contrast to the more dynamic formulation explored in~\cite{Demey14b}, where common knowledge is always reached through dialogue among the agents, Aumann's formulation is static and fixed. As a result, depending on the sets $Q_1, Q_2, ..., Q_N$ the agents start with, it may well be the case that there is no $C(E)$ representing their common knowledge. The lemma below makes this affirmation more precise.  

\begin{lem}
Let $(\Omega, Q_1, Q_2,...,Q_N, \Sigma)$ be a knowledge model and $E$ an event. If $C(E) \neq \emptyset$, then for each $i \in [N]$ there exists a finite family $\{D_{i}^{1},...,D_{i}^{k_i}\} \subseteq  Q_i$ such that 
\begin{enumerate}
    \item [(i)] $C(E)=\bigcup_{l \in [k_i]}D_{i}^{l}$, and
    \item [(ii)] $D_{i}^{l} \cap D_{i}^{l'} = \emptyset$,
\end{enumerate}
for every $i \in [N]$ and for every $l \neq l' \in [k_{i}]$
\label{Lemma.SetTheoryLemma}
\end{lem}
\begin{proof}
We only study the inclusion $C(E) \subseteq \bigcup_{l \in [k_i]}D_{i}^{l}$. The other direction follows directly from definition~\ref{Def.CommonKnowledge}. Suppose that $\omega \in C(E)$. In this case:
\begin{align}
    &\Longrightarrow \omega \in M_{m}(E),\,\, \forall m \in \mathbb{N}   \nonumber \\
    & \Longrightarrow \omega \in E, \,\, Q_{i}(\omega) \subseteq E, Q_{i}(\omega) \subseteq K_{1}(E)\cap ... \cap K_{N}(E),... \nonumber \\
    & \Longrightarrow Q_{i}(\omega) \subseteq M_{m}(E), \forall m \Longrightarrow Q_{i}(\omega) \subseteq C(E).
\end{align}
\\
As we are considering only a finite set of states of the world, see def.~\ref{Def.KnowledgeModel}, $C(E)$ is also finite. In this case, with no loss of generality, we can assume that $C(E)=\{e_1,...,e_{|C(E)|}\}$. Now, note that $\{Q_{i}(e_1),...,Q_{i}(e_{|C(E)|})\}$ has at most $|C(E)|$ disjoint elements, and that it is exactly its non-repeating elements we will use to form the set $\{D_{i}^{1},...,D_{i}^{k_i}\}$.
\end{proof}

As we anticipated in the preceding paragraph, before lemma~\ref{Lemma.SetTheoryLemma}, the existence of common knowledge in the knowledge model imposes strong restrictions on the structure of the outcome set $Q_1,..., Q_N$, and therefore, not every event can be common knowledge among the agents. Also, recalling def.~\ref{Def.Knowledge}, note that the equality demanded by lemma~\ref{Lemma.SetTheoryLemma} imposes that the common knowledge set is also known by all of the agents in the model.

Now that we have defined what common knowledge is and how it can be characterised in terms of the partitions $Q_1,...,Q_N$ in a given knowledge model, we have all the ingredients to address Aumann's agreement theorem.

\begin{thm}[Aumann's Agreement Theorem]
Let $(\Omega, Q_1, Q_2,...,Q_N, \Sigma)$ be a knowledge model and $\mathbb{P}:\Sigma \rightarrow [0,1]$ be a probability function over 
$\Omega$. Define 
\begin{equation}
E:=\cap_{i \in [N]} \{\omega \in \Omega | \,\, \mathbb{P}(H|Q_{i}(w))=q_i \},
\label{Eq.ThmClassicalCommonKnowledgeEvent}
\end{equation}
with $H \in \Sigma$ and $q_1,q_2,...,q_N \in [0,1].$ If $\mathbb{P}(C(E)) \neq 0$, then \begin{equation}q_1=q_2=...=q_N=\mathbb{P}(H|C(E)).\end{equation}
\label{Thm.ClassicalAumannTheorem}
\end{thm}
\begin{proof}
\begin{align}
 &\mathbb{P}(H|C(E))=\frac{\mathbb{P}(H \cap C(E))}{\mathbb{P}(C(E))}= \frac{\mathbb{P}(H \cap \bigcup_{j \in [k_i]}D_{i}^{j})}{\mathbb{P}(\bigcup_{j \in [k_i]}D_{i}^{j})} \\
 &= \frac{\sum_{j \in [k_i]}\mathbb{P}(H | D_{i}^{j}) \mathbb{P}(D_{i}^{j})}{\sum_{j \in [k_i]}\mathbb{P}(D_{i}^{j})} = \frac{\sum_{j \in [k_i]} q_i  \mathbb{P}(D_{i}^{j})}{\sum_{j \in [k_i]}\mathbb{P}(D_{i}^{j})}  \\
 &= q_i. 
 \label{Eq.ThmClassicalAumann}
\end{align}
As the argument is valid for each agent $i$, it follows that 
\begin{equation}
    q_1=q_2=...=q_N=\mathbb{P}(H|C(E)).
\end{equation}
\end{proof}

In words, Aumann's theorem says that for a given proposition $H$, which all the agents want to assign a probability to, provided the individual probabilities each agent assigns to $H$ are common knowledge, then these individual probabilities must be the same---even if they have been obtained by completely different observations and experiences. Additionally, note that both the structure of the common knowledge set $C(E)$ and the notion of conditioning we adopt here force a common assignment across all agents for the event `$H$ given $C(E)$'. The same structure is replicated in our generalisations of the agreement theorem.

There is another point worth noting. The event $E$ defined in eq.~\eqref{Eq.ThmClassicalCommonKnowledgeEvent} depends on the particular choice of $H \in \Sigma$, and so writing the posteriors $q_1,q_2,...,q_N$ with no explicit mention of $H$ is a slight abuse of notation. It would have been more precise if we had written $E(H)$ for the event in~\eqref{Eq.ThmClassicalCommonKnowledgeEvent}, and $q_1(H),...,q_N(H)$ for the list of agents' posterior probability assignments. Obviously, the agreement theorem also holds true if we make this functional dependence more explicit---it follows as a direct consequence of Aumann's original setup.   

\begin{thm}[Aumann's Theorem - Second Version]
Let $(\Omega, Q_1, Q_2,...,Q_N, \Sigma)$ be knowledge model and $\mathbb{P}:\Sigma \rightarrow [0,1]$ be a probability function over 
$\Omega$. For each $H \in \Sigma$, define: 
\begin{equation}
E_{H}:=\cap_{i \in [N]} \{\omega \in \Omega | \,\, \mathbb{P}(H|Q_{i}(w))=q_i(H) \},
\label{Eq.ThmClassicalCommonKnowledgeEventv2}
\end{equation}
with $q_1,q_2,...,q_N: \Sigma \rightarrow [0,1].$ If for each $H$ in $\Sigma$ $\mathbb{P}(C(E_H)) \neq \emptyset$, then \begin{equation}q_1=q_2=...=q_N=\mathbb{P}( \cdot |C(E_{H})).\end{equation}
\label{Thm.ClassicalAumannTheoremv2}
\end{thm}

Thm.~\ref{Thm.ClassicalAumannTheoremv2} is a stronger formulation of Aumann's original theorem. Nonetheless, it still states that agents with the same priors cannot agree to disagree on the basis of their common knowledge of their posteriors. We have only made the functional dependence on $H$ more explicit, emphasising the validity of Aumann's theorem for every $H$ event~\footnote{See the work in~\cite{Demey14} for a discussion about the role played by the prior probability distribution and about the extent to which Aumann's theorem is always true, provided that the agents follow a precise notion of dialogue.}. In fact, we have only done so because this is exactly the format that creates the best parallel with our generalisations of the agreement theorem. 

%%%%%%%%%%%%%%%%%%%%%%%%%%%%%%%%%%%%%%%%%%%
%  Quantum Version of Aumann's Theorem    %
%%%%%%%%%%%%%%%%%%%%%%%%%%%%%%%%%%%%%%%%%%
\section{Quantum Version of the Agreement Theorem}\label{Sec.QuantumArgument}  

This section presents a hybrid quantum version of the agreement theorem. We prove that when the agents describe their uncertainty about a quantum system using a quantum state as opposed to a classical probability distribution, common knowledge of their quantum posteriors will always lead to an impossibility of agreeing to disagree. The present section also makes explicit the structure in Aumann's argument that is central not only for the original result but also for any potential generalisation of it.  

This is a hybrid quantum-classical generalisation because we are still using the classical knowledge model of def.~\ref{Def.KnowledgeModel}. To a certain extent, we are granting the agents the possibility of expanding their reasoning abilities, as they are allowed to go beyond standard probability theory and use density operator valued measures (DOVMs) as their toolbox of analysis---even though their knowledge about these systems is in the form of classical data/classical set-theoretical models. We start by defining what a DOVM is.

\begin{definition}[DOVM]
Let $\mathcal{M}=(\Omega, \Sigma)$ be a measurable space and $\mathcal{H}$ a finite-dimensional Hilbert space. Let $\mathcal{L}(\mathcal{H}):=\{\rho: \mathcal{H} \rightarrow \mathcal{H}; \,\, \rho \mbox{ is linear}\}$ be the set of linear operators over $\mathcal{H}$. Finally, denote $\mathcal{D}(\mathcal{H}):=\{\rho \in \mathcal{L}(\mathcal{H}); \,\, \rho \geq 0 \mbox{ and } \Tr(\rho)=1\}$ the set of density operators over $\mathcal{H}$. A density operator valued measure (DOVM) over $\mathcal{M}$ is a map $\rho: \Sigma \rightarrow \mathcal{L}(\mathcal{H})$ such that:
\begin{itemize}
    \item [(i)] $\rho(\Omega)$ is a density operator in $\mathcal{D}(\mathcal{H})$; 
    \item [(ii)] $\rho(\Lambda) \geq 0$, for all $\Lambda \in \Sigma$;
    \item [(iii)] $\rho\left( \cup_{j \in J} \Lambda_{j} \right)= \sum_{j \in J}\rho(\Lambda_{j})$, for any countable family $\{\Lambda_{j}\}_{j \in J}$ of disjoint subsets in $\Sigma$. 
\end{itemize}
\label{Def.DOVMs}
\end{definition}

\begin{definition}[Conditional State]
Let $\rho: \Sigma \rightarrow \mathcal{L}(\mathcal{H})$ be a DOVM. For each $\Lambda \in \Sigma$, we define the object
\begin{equation}
    \rho_{|\Lambda}:=\frac{\rho(\Lambda)}{\mbox{Tr}[\rho(\Lambda)]}
    \label{Eq.DefConditionalFromDOVM}
\end{equation}
as the conditional state of the DOVM $\rho: \Sigma \rightarrow \mathcal{L}(\mathcal{H})$ with respect to the event $\Lambda$.
\label{Def.ConditionalFromDOVM}
\end{definition}

The following two propositions show the relationship between DOVMs and Positive Operator Valued Measures (POVMs), which are employed to describe measurements in quantum theory. 

\begin{proposition}
Given a DOVM $\rho: \Sigma \rightarrow \mathcal{L}(\mathcal{H})$, the following map is a POVM on supp$(\rho(\Omega))$:
\begin{equation}
    E :: \Lambda \mapsto \rho(\Omega)^{-1/2}\rho(\Lambda)\rho(\Omega)^{-1/2}
\end{equation}
\label{Prop.FromDOVMtoPOVM}
\end{proposition}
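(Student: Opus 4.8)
The plan is to verify directly that the candidate map $E\colon\Lambda\mapsto\rho(\Omega)^{-1/2}\rho(\Lambda)\rho(\Omega)^{-1/2}$ satisfies the defining axioms of a POVM on the subspace $\mathrm{supp}(\rho(\Omega))$, namely: each $E(\Lambda)$ is a positive semidefinite operator on that subspace, $E(\Omega)$ equals the identity on that subspace, and $E$ is countably additive on disjoint unions. Throughout, $\rho(\Omega)^{-1/2}$ is understood as the (generalised) inverse square root taken on $\mathrm{supp}(\rho(\Omega))$, i.e.\ the inverse of $\rho(\Omega)^{1/2}$ restricted to its support, which is well defined precisely because $\rho(\Omega)$ is a density operator (hence positive semidefinite) by Definition~\ref{Def.DOVMs}(i).

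First I would establish positivity. By Definition~\ref{Def.DOVMs}(ii) we have $\rho(\Lambda)\ge 0$ for every $\Lambda\in\Sigma$, so $\rho(\Lambda)=X^{\dagger}X$ for some operator $X$; then $\rho(\Omega)^{-1/2}\rho(\Lambda)\rho(\Omega)^{-1/2}=\bigl(X\rho(\Omega)^{-1/2}\bigr)^{\dagger}\bigl(X\rho(\Omega)^{-1/2}\bigr)\ge 0$, using that $\rho(\Omega)^{-1/2}$ is self-adjoint on the support. Next, the normalisation: on $\mathrm{supp}(\rho(\Omega))$ the operator $\rho(\Omega)^{1/2}$ is invertible, so $E(\Omega)=\rho(\Omega)^{-1/2}\rho(\Omega)\rho(\Omega)^{-1/2}=\rho(\Omega)^{-1/2}\rho(\Omega)^{1/2}\rho(\Omega)^{1/2}\rho(\Omega)^{-1/2}$, which is the identity on $\mathrm{supp}(\rho(\Omega))$. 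Finally, countable additivity: for a countable disjoint family $\{\Lambda_j\}_{j\in J}$, Definition~\ref{Def.DOVMs}(iii) gives $\rho(\cup_j\Lambda_j)=\sum_j\rho(\Lambda_j)$, and conjugating both sides by $\rho(\Omega)^{-1/2}$ — a bounded linear map — commutes with the (norm-convergent, since we are in finite dimension) sum, yielding $E(\cup_j\Lambda_j)=\sum_j E(\Lambda_j)$.

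The one genuine subtlety — and the step I expect to require the most care — is the handling of $\rho(\Omega)^{-1/2}$ when $\rho(\Omega)$ is not full rank, i.e.\ ensuring that $E(\Lambda)$ is supported inside $\mathrm{supp}(\rho(\Omega))$ so that the restriction to that subspace is meaningful and the normalisation axiom reads as the identity \emph{there} rather than on all of $\mathcal{H}$. This is resolved by observing that for any $\Lambda$, $0\le\rho(\Lambda)\le\rho(\Omega)$ (since $\rho(\Omega)=\rho(\Lambda)+\rho(\Omega\setminus\Lambda)$ with $\rho(\Omega\setminus\Lambda)\ge 0$ by axioms (ii)–(iii)), and an operator inequality $0\le A\le B$ forces $\ker B\subseteq\ker A$, hence $\mathrm{supp}(\rho(\Lambda))\subseteq\mathrm{supp}(\rho(\Omega))$. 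Therefore $\rho(\Omega)^{-1/2}\rho(\Lambda)\rho(\Omega)^{-1/2}$ annihilates $\mathrm{supp}(\rho(\Omega))^{\perp}$ and lives entirely on $\mathrm{supp}(\rho(\Omega))$, which is exactly what is needed for the statement. Finite-dimensionality of $\mathcal{H}$ (assumed in Definition~\ref{Def.DOVMs}) makes all the spectral-calculus manipulations and the interchange of limit and conjugation routine.
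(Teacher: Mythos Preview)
Your proposal is correct and follows essentially the same direct-verification route as the paper: check positivity of $E(\Lambda)$ and that $E(\Omega)$ is the identity, with the paper simply assuming full rank of $\rho(\Omega)$ without loss of generality where you instead treat the support explicitly. If anything, your version is more complete, since you also verify countable additivity and justify the restriction to $\mathrm{supp}(\rho(\Omega))$ via $0\le\rho(\Lambda)\le\rho(\Omega)$, both of which the paper's proof leaves implicit.
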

\begin{proof}
Without a loss of generality, we will assume that $\rho(\Omega)$ has full rank. The fact that E is positive and that $E(\Omega)=\id$ follows directly from the definition: 
\begin{align}
    \langle& \psi, E(\Lambda) \psi \rangle = \langle \psi, \rho(\Omega)^{-1/2}\rho(\Lambda)\rho(\Omega)^{-1/2} \psi \rangle \nonumber \\
    &= \langle \rho(\Omega)^{-1/2} \psi, \rho(\Lambda)\rho(\Omega)^{-1/2} \psi \rangle \geq 0, \forall \psi \in \mathcal{H}.
\end{align}
\begin{align}
    E(\Omega)= \rho(\Omega)^{-1/2}\rho(\Omega)\rho(\Omega)^{-1/2}=\id
\end{align}
\end{proof}

\begin{proposition}
Given a POVM $E : \Sigma \rightarrow \mathcal{L}(\mathcal{H})$ and a state $\sigma$, the mapping $$\rho(\Lambda)=\sigma^{1/2}E(\Lambda)\sigma^{1/2}$$ is a DOVM.
\label{Prop.FromPOVMtoDOVM}
\end{proposition}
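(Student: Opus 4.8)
The plan is to check, one by one, the three conditions in Definition~\ref{Def.DOVMs} for the candidate map $\rho(\Lambda)=\sigma^{1/2}E(\Lambda)\sigma^{1/2}$. The single structural observation driving everything is that conjugation $X\mapsto\sigma^{1/2}X\sigma^{1/2}$ by the self-adjoint operator $\sigma^{1/2}$ (which exists via the functional calculus because $\sigma\geq 0$) is a linear, positivity-preserving, and --- the space $\mathcal{L}(\mathcal{H})$ being finite-dimensional --- bounded operation on $\mathcal{L}(\mathcal{H})$.

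First I would dispatch condition (i): since $E$ is a POVM, $E(\Omega)=\id$, so $\rho(\Omega)=\sigma^{1/2}\,\id\,\sigma^{1/2}=\sigma$, which is a density operator by hypothesis. For condition (ii), fix $\Lambda\in\Sigma$; positivity of $\rho(\Lambda)$ follows because for every $\psi\in\mathcal{H}$ one has $\langle\psi,\rho(\Lambda)\psi\rangle=\langle\sigma^{1/2}\psi,\,E(\Lambda)\,\sigma^{1/2}\psi\rangle\geq 0$, using self-adjointness of $\sigma^{1/2}$ together with $E(\Lambda)\geq 0$ (the latter being part of the definition of a POVM).

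The only step that takes a moment is condition (iii). Given a countable disjoint family $\{\Lambda_j\}_{j\in J}\subseteq\Sigma$, countable additivity of the POVM gives $E\big(\bigcup_{j\in J}\Lambda_j\big)=\sum_{j\in J}E(\Lambda_j)$, the series converging in operator norm since we are in finite dimensions. Applying the bounded linear map $X\mapsto\sigma^{1/2}X\sigma^{1/2}$ to both sides and pulling it inside the convergent sum yields $\rho\big(\bigcup_{j\in J}\Lambda_j\big)=\sum_{j\in J}\sigma^{1/2}E(\Lambda_j)\sigma^{1/2}=\sum_{j\in J}\rho(\Lambda_j)$, which is exactly condition (iii).

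The hard part, such as it is, is just making the interchange of the conjugation with an infinite sum rigorous; this is immediate once one notes that conjugation by $\sigma^{1/2}$ is continuous on the finite-dimensional (hence complete) space $\mathcal{L}(\mathcal{H})$, so it commutes with norm limits. One may optionally add the remark that $\mathrm{supp}(\rho(\Lambda))\subseteq\mathrm{supp}(\sigma)=\mathrm{supp}(\rho(\Omega))$, which shows that the constructions of Propositions~\ref{Prop.FromDOVMtoPOVM} and~\ref{Prop.FromPOVMtoDOVM} are mutually inverse on the relevant support, but that observation is not needed for the statement itself.
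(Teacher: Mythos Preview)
Your proof is correct and follows essentially the same route as the paper: verify positivity of $\rho(\Lambda)$ via $\langle\psi,\rho(\Lambda)\psi\rangle=\langle\sigma^{1/2}\psi,E(\Lambda)\sigma^{1/2}\psi\rangle\geq 0$, identify $\rho(\Omega)=\sigma$ as a density operator, and push the conjugation by $\sigma^{1/2}$ through the countable sum to get additivity. Your added remarks on continuity of the conjugation map and on the support inclusion are valid refinements but not present in the paper's terse verification.
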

\begin{proof}
The proof of this proposition is analogous to the previous one - it is a direct consequence of the definitions. As a matter of fact,  
\begin{align}
    \langle \psi, \rho(\Lambda) \psi \rangle &= \langle \psi, \sigma^{1/2}E(\Lambda)\sigma^{1/2} \psi \rangle \nonumber \\
    &=    \langle  \sigma^{1/2} \psi, E(\Lambda)\sigma^{1/2} \psi \rangle \geq 0.
\end{align}
\begin{align}
    \mbox{Tr}[\rho(\Omega)]&= \mbox{Tr}[\sigma^{1/2}E(\Omega)\sigma^{1/2}] = \mbox{Tr}[\sigma^{1/2} \id \sigma^{1/2}] \nonumber \\
    &= \mbox{Tr}[\sigma]=1.
\end{align}
\begin{align}
    \rho\left(\bigsqcup_{j}\Lambda_{j}\right)&= \sigma^{1/2}E\left(\bigsqcup_{j}\Lambda_{j}\right)\sigma^{1/2}=\sigma^{1/2}\sum_{j}E\left(\Lambda_{j}\right)\sigma^{1/2} \nonumber \\
    &=\sum_{j}\sigma^{1/2}E(\Lambda_{j})\sigma^{1/2}=\sum_{j}\rho(\Lambda_j)
\end{align}
\end{proof}

The following theorem represents our hybrid quantum extension of the agreement theorem. Its proof is entirely based on Aumann's original argument, especially on lemma~\ref{Lemma.SetTheoryLemma}. We will provide further comments on this aspect later on.

\begin{thm}[Quantum Agreement Theorem]
Let $(\Omega,Q_1,Q_2,...,Q_N,\Sigma)$ be a knowledge model and $\rho:\Sigma \rightarrow \mathcal{L}(\mathcal{H})$ be a DOVM over $\Omega$. Define 
\begin{equation}
    E:=\bigcap_{i \in [N]}\{\omega \in \Omega| \,\, \rho_{|Q_{i}(\omega)}=\sigma_{i}\},
\end{equation}
with $\sigma_{i}$ a density operator acting on $\mathcal{H}$. If Tr$[\rho(C(E))] \neq 0$, then 
\begin{equation}
\sigma_1=\sigma_2=...=\sigma_N=\rho_{|C(E)}.
\end{equation}
\label{Thm.QuantumAumannTheorem}
\end{thm}
\begin{proof}
\begin{align}
    \rho_{|C(E)}&=\frac{\rho(C(E))}{\mbox{Tr}[\rho(C(E))]}= \frac{\rho\left(  \bigsqcup_{j \in [k_{i}]}D_{i}^{j}  \right)}{\mbox{Tr}\left[\rho\left(  \bigsqcup_{j \in [k_{i}]}D_{i}^{j}  \right) \right]}   \\
    &= \frac{\sum_{j \in [k_i]}\rho(D_{i}^{j})}{\mbox{Tr}\left[ \sum_{j \in [K_i]}\rho(D_{i}^{j}) \right]} =\frac{\sum_{j \in [k_i]}\rho_{|D_{i}^{j}}\mbox{Tr}\left[ \rho(D_{i}^{j}) \right]}{\sum_{j \in [k_i]}\mbox{Tr}\left[ \rho(D_{i}^{j}) \right]}  \\
    &=\sigma_{i}\frac{\sum_{j \in [k_i]}\mbox{Tr}\left[ \rho(D_{i}^{j}) \right]}{\sum_{j \in [k_i]}\mbox{Tr}\left[ \rho(D_{i}^{j}) \right]}  \\
    &= \sigma_{i}.
\end{align}
As the choice of index $i \in [N]$ has been entirely arbitrary, we can conclude that 
\begin{equation}
    \sigma_1=\sigma_2=...=\sigma_N=\rho_{|C(E)}.
\end{equation}
\end{proof}

Recall that this is a hybrid quantum-classical theorem. Instead of assigning classical probability distributions to events (or propositions), agents are allowed to use a richer mathematical object---a DOVM. Still, we are using the very same classical knowledge model---based on sigma-algebras and partitions of a given set---of def.~\ref{Def.KnowledgeModel} as a proxy for each agent's enquiring/learning model. This is a significant reason why Aumann's argument extends to the quantum case. 

In more detail, two important properties carry over from the classical case to this hybrid quantum-classical case.  First, we have a notion of (quantum) probability defined over the measure space that is part of the underlying knowledge model, and this notion of probability behaves like an affine function on exclusive events.  This is the role of the DOVM.  Second, we have a definition of conditional states that satisfies an analogue of the law of total probability, i.e., for any partition $Q$, $\rho(\Omega) = \sum_{\Lambda \in Q}\rho_{|\Lambda} \Tr [\rho(\Lambda)]$. 

In summary, even though we have provided agents with a supposedly more powerful object to reason about events (or propositions), they are still limited by the inability to agree to disagree when common knowledge is present. The following section demonstrates that, contrary to recent arguments in the literature~\cite{CTGAAP21}, agents with access to resources beyond quantum mechanics are also subject to Aumann's impossibility theorem. In this sense, the agreement theorem cannot be viewed as a physical principle that separates certain classes of generalised probability theories.

%%%%%%%%%%%%%%%%%%%%%%%%%%%%%%%%
%   Aumann's Theorem in GPTs   %
%%%%%%%%%%%%%%%%%%%%%%%%%%%%%%%% 
\section{Aumann's Theorem in GTPs}\label{Sec.GPTArgument}  

The next step in our argument involves proving that Aumann's agreement theorem is also valid in generalised probabilistic theories (GPTs)~\cite{JH14,Plavala23}, a more general framework that encompasses classical probability and quantum theory as special cases. The upshot is that whenever such a theory is well-behaved concerning exclusive events and comes equipped with a well-defined notion of conditioning, then the agreement theorem must hold. % To show the extension of our argument, we will base our discussion on what goes by the name of generalised probabilistic theories (GPT)---a powerful mathematical tool to study foundations of (quantum) physics~\cite{JH14}. All in all, this section shows how to formulate Aumann's theorem in GPTs 

We start by briefly reviewing exactly what we need from a GPT, and we suggest \cite{Plavala23} for a more comprehensive introduction to the subject. We then proceed with two additional definitions: state-valued measure (SVM) and conditional states. We conclude this section with our generalisation of the agreement theorem.  

\begin{definition}[Generalised Probability Theory]
A finite-dimensional \emph{generalised probability theory} (GPT) consists of a  triple $\mathcal{G}=(V,V^{+},u)$ where $V$ is a finite dimensional vector space, $V^{+}$ is a proper cone (convex, closed, pointed and generating) and $u$ is an order unit in the dual cone $(V^{+})^{\ast}:=\{f \in V^{\ast}; \,\, f(x) \geq 0; \forall x \in V\}$. Given $\mathcal{G}=(V,V^{+},u)$, its associated state space $\mathcal{S} \subseteq V^{+}$ is defined via 
    \begin{equation}
        \mathcal{S}:=\{\mu \in V^{+}| u(\mu)=1\}; 
    \end{equation}
and its set of effects $\mathcal{E}(\mathcal{S}) \subseteq (V^{+})^{\ast}$ is defined via: 
     \begin{equation}
         \mathcal{E}(\mathcal{S}):=\{\varphi: V^{+} \rightarrow \mathbb{R} | \,\, \varphi \,\, \mbox{is linear and} \,\, 0 \prec \varphi \prec u \},
     \end{equation}
 where $\varphi \prec \psi$ means $\varphi(x) \leq \psi(x)$ for every $x$ in $V^{+}$.    
\label{Def.GPT}
\end{definition}

In other words, generalised probabilistic theories form a framework used to capture the essence of states and measurements in a leaner and cleaner sense. Classical probability theory, as well as quantum theory and Box-world, are typical examples of GPTs. 

\begin{definition}[SVM]
Let $\mathcal{M}=(\Omega, \Sigma)$ be a measurable space and $(V,V^{+},u)$ be an arbitrary GPT. A \emph{state valued measure} (SVM) over $\mathcal{M}$ is a measurable function $\mu: \Sigma \rightarrow V^{+} $ satisfying:
\begin{itemize}
    \item [(i)] $\mu(\Omega) \in \mathcal{S}$
    \item [(ii)] $\mu(\Lambda) \in V^{+}$, for all $\Lambda \in \Sigma$;
    \item [(iii)] $\mu\left( \cup_{j \in J} \Lambda_{j}\right)=\sum_{j \in J}\mu(\Lambda_{j})$  for any countable family $\{\Lambda_{j}\}_{j \in J}$ of disjoint subsets in $\Sigma$. 
\end{itemize}
\label{Def.SVMinGPT}
\end{definition}

\begin{definition}
Let $\mu: \Sigma \rightarrow V^{+}$ be an SVM over a measurable space $\mathcal{M}=(\Omega, \Sigma)$. For each $\Lambda \in \Sigma$, we define the object
\begin{equation}
    \mu_{|\Lambda}:=\frac{\mu(\Lambda)}{u\left[ \mu(\Lambda) \right]}
    \label{Eq.DefConditionalFromSVM}
\end{equation}
as the conditional state of the SVM $\mu: \Sigma \rightarrow V^{+}$ with respect to the event $\Lambda$.
\label{Def.ConditionalFromSVM}
\end{definition}

At this stage, it should be clear where we are heading. Recall that two main things were central to the machinery involved in proving the agreement theorem. First, we need a notion of (generalised) probability defined over the measure space that is part of the underlying knowledge model, which behaves like an affine function on exclusive events. This is provided by the notion of an SVM.  Second, we need a notion of conditioning that satisfies a generalisation of the law of total probability, which in this case is, for any partition $Q$, $\mu(\Omega) = \sum_{\Lambda \in Q} \mu_{|\Lambda} u [ \mu(\Lambda)]$. Naturally, Aumann's original framework inherently possesses these two aspects. In the hybrid quantum case, defs.~\ref{Def.DOVMs} and~\ref{Def.ConditionalFromDOVM} granted quantum theory with those central features. Similarly, as we will see below, the defs.~\ref{Def.SVMinGPT} and~\ref{Def.ConditionalFromSVM} play the same fundamental role in our version of the agreement theorem for GPTs.

\begin{thm}[Agreement Theorem in GPTs]
Let $(\Omega,Q_1,Q_2,...,Q_N,\Sigma)$ be a knowledge model and $\mu:\Sigma \rightarrow V^{+}$ be an SVM over $\Omega$. Define 
\begin{equation}
    E:=\bigcap_{i \in [N]}\{\omega \in \Omega| \,\, \mu_{|Q_{i}(\omega)}=\mu_{i}\},
\end{equation}
where each $\mu_{i}$ is a state in $\mathcal{S}$. If $u[\mu(C(E))] \neq 0$, then 
\begin{equation}
\mu_1=\mu_2=...=\mu_N=\mu_{|C(E)}.
\end{equation}
\label{Thm.GPTAumannTheorem}
\end{thm}
\begin{proof}
\begin{align}
    \mu_{|C(E)}&=\frac{\mu(C(E))}{u[\mu(C(E))]}= \frac{\mu\left(  \bigsqcup_{j \in [k_{i}]}D_{i}^{j}  \right)}{u\left[\mu\left(  \bigsqcup_{j \in [k_{i}]}D_{i}^{j}  \right) \right]}   \\
    &= \frac{\sum_{j \in [k_i]}\mu(D_{i}^{j})}{u\left[ \sum_{j \in [k_i]}\mu(D_{i}^{j}) \right]} =\frac{\sum_{j \in [k_i]}\mu_{|D_{i}^{j}}u(\left[ \mu(D_{i}^{j}) \right])}{\sum_{j \in [k_i]}u\left[ \mu(D_{i}^{j}) \right]}  \\
    &=\mu_{i}\frac{\sum_{j \in [k_i]}u(\left[ \mu(D_{i}^{j}) \right])}{\sum_{j \in [k_i]}u\left[ \mu(D_{i}^{j}) \right]}  \\
    &= \mu_{i}.
\end{align}
As the choice of index $i \in [N]$ has been entirely arbitrary, we can conclude that 
\begin{equation}
    \mu_1=\mu_2=...=\mu_N=\mu_{|C(E)}.
\end{equation}
\end{proof}

The fact that theorem~\ref{Thm.GPTAumannTheorem}'s proof works exactly like the previous ones hints at something more profound. The agreement theorem should be considered more like a mathematical statement---telling us about probabilistic assignments and their conditioning defined over a specific construction in set theory (knowledge models)---than a signature of any physical restriction in the space of correlations. Related to that, because the structure of the proofs is practically the same, it is true that theorem~\ref{Thm.QuantumAumannTheorem} follows from theorem~\ref{Thm.GPTAumannTheorem} for the appropriate GPT. Also, in the case where all DOVMs commute with each other, the classical Aumann's agreement theorem—in its alternative version, as in theorem~\ref{Thm.ClassicalAumannTheoremv2}—is a particular case of theorem~\ref{Thm.QuantumAumannTheorem}. 

To conclude, we can say that in our hybrid GPT version of Aumann's agreement theorem, it is also true that agents cannot agree to disagree—provided they have common knowledge. A result showing that (i) although highly counter-intuitive, the impossibility of agreeing to disagree in the presence of common knowledge is also valid in any reasonable probability theory, regardless of how abstract or general they might be; and (ii) that the agreement theorem cannot be used as a criterion to separate physical theories. Nonetheless, it is essential to recognise that these conclusions come with the limitations inherent to the theory underlying them.

%%%%%%%%%%%%%%%%%%%%%%%%%%%%%%%%%%%%%%%
%       Limitations                   %
%%%%%%%%%%%%%%%%%%%%%%%%%%%%%%%%%%%%%%%

\section{Limitations}\label{Sec.Limitations}

To begin with, we emphasise that our argument is based on a standard set-theoretical model for knowledge, the same model adopted by Aumann in~\cite{Aumann76}. Although that model does capture many elements of what we may want to mean by ``\emph{agent $a$ knows $E$}'', it falls short in capturing other essential aspects of (multi-) agent-centred knowledge. For example, in our framework, there is no room for communication between the agents. This fact is so significant that, within our model---and in Aumann's, for that matter---it might well be the case that agents reach common knowledge without exchanging any information---not even a bit. % Additionally, the very notion of questions or measurements is, to a certain extent, entirely artificial in this model. Even though one can trace a parallel with sharp measurements, knowing the outcome of a particular question does not change the underlying model. 
Knowledge models falling under the umbrella of dynamic epistemic logic are better adapted to deal with this type of problem. There, the concept of public announcements takes on a central role. When a true fact is publicly communicated, it imposes certain constraints on each agent's underlying knowledge. The fact is that after a finite number of public announcements, the agents end up in a state of common knowledge, and, as we saw, they are forced to have the same probabilistic description. See ref.~\cite{DHK08} and refs.~\cite{Demey14,Demey14b} for a critical introduction to the topic. The seminal work in ref.~\cite{GP82} also addressed the agreement theorem from a dynamical standpoint, where the communication between the parts plays a significant role. 

%\cris{[THIS IS YOUR PART]}Also, because we have extended Aumann's theorem into the realm of generalised probability theories, one should also be extra careful with the possible causal relations that hold among the random variables (or regions~\cite{LS13,LS14}) in question. XXXXXXXXXXXXXXXXXXXX

As emphasized earlier, our result depends on assuming that the notion of conditioning obeys an appropriate generalization of the law of total probability.  For a GPT, this takes the form: for any partition $Q$, $\mu(\Omega) = \sum_{\Lambda \in Q} \mu_{|\Lambda} u [ \mu(\Lambda)]$, and in the special case of quantum theory it takes the form: $\rho(\Omega) = \sum_{\Lambda \in Q}\rho_{|\Lambda} \Tr [\rho(\Lambda)]$.  The classical notion of a conditional probability measure has this property and, from a mathematical point of view, we are free to make this a definitional property of what we mean by conditioning in a GPT.  But conditioning is not just a formal mathematical concept.  It is used in a variety of practical applications.  Is it feasible that this generalized law of total probability should apply in practice?

One of the main applications of conditional probability is the use of Bayesian conditioning to update probabilities in the light of new evidence.  If you initially assign a probability function $\mathbb{P}(H)$ and then learn the event $\Lambda$, Bayesian conditioning recommends that you should update your probability function to $\mathbb{P}_{\Lambda}(H) = \mathbb{P}(H|\Lambda)$.  If we replace the update rule with something else, then Aumann's agreement theorem would still hold provided $\sum_{\Lambda \in Q} \mathbb{P}(\Lambda) \mathbb{P}_{\Lambda}(H) = \mathbb{P}(H)$ for all partitions $Q$.  In the context of updating probabilities, this is called the \emph{reflection principle}.  Bayesian conditioning is not the only possible update rule that satisfies the reflection principle.  For example, let $0<q<1$ and set $\mathbb{P}_\Lambda(H) = q\mathbb{P}(H) + (1-q)\mathbb{P}(H|\Lambda)$.  This satisfies the reflection principle and might be an appropriate rule to adopt if you think that you are prone to hallucinating evidence for $\Lambda$ with probability $q$.  However, if you demand in addition that, for any partition $Q$ and any $\Lambda \in Q$, $\mathbb{P}_{\Lambda}(\Lambda) = 1$, then it is easy to show that the only possibility is Bayesian conditioning.  We call this principle \emph{evidential certainty}, as it means that your updated probabilities reflect certainty about the evidence you have collected.

So, classically, we could view both the reflection principle and evidential certainty as definitional properties of conditioning, and look for appropriate generalizations of them in quantum theory and GPTs.  However, generically, it is not possible to find a rule that has both properties.  Consider quantum theory and suppose that we measure a POVM $E:\Sigma \rightarrow \mathcal{L}(\mathcal{H})$ on a system prepared in the state $\rho$ and that we learn a partition $Q$.  The quantum version of the reflection principle would have $\rho$ update to states $\rho_{|\Lambda}$ that satisfy $\sum_{\Lambda \in Q}  \rho_{|\Lambda} \mathrm{Tr} [ E(\Lambda) \rho]= \rho$.  One way of satisfying this is to make use of the correspondence between POVMs and DOVMs (propositions \ref{Prop.FromDOVMtoPOVM} and \ref{Prop.FromPOVMtoDOVM}) and to define the update rule
\begin{align}
    \rho \rightarrow \rho_{|\Lambda} = \frac{1}{\mathrm{Tr} [E(\Lambda) \rho ]} \rho^{\frac{1}{2}} E(\Lambda) \rho^{\frac{1}{2}}.
\end{align}

The quantum version of evidential certainty would be $\mathrm{Tr} [E(\Lambda) \rho_{|\Lambda}] = 1$.  This is not possible to satisfy for a general POVM, but for a projective measurement, where each $E(\Lambda)$ is a projection operators, the L\"{u}ders rule

\begin{align}
    \rho \rightarrow \rho_{\Lambda} = \frac{1}{\mathrm{Tr} [E(\Lambda) \rho]} E(\Lambda) \rho E(\Lambda),
\end{align}
does satisfy evidential certainty.

However, it is not possible to find a rule that satisfies both the reflection principle and evidential certainty.  To see this, consider a qubit prepared in the state $\rho = \ket{+}\bra{+}$, where $\ket{+} = \frac{1}{\sqrt{2}} \left ( \ket{0} + \ket{1} \right )$ and suppose that we measure the projective measurement $E({0}) = \ket{0}\bra{0}, E({1}) = \ket{1}\bra{1}$, which is defined on the space $\Omega = \{0,1\}$.  Reflection requires that
\begin{align}
    \frac{1}{2} \rho_{0} + \frac{1}{2} \rho_{1} = \ket{+}\bra{+},
\end{align}
but this requires $\rho_{0} = \rho_{1} = \ket{+}\bra{+}$ because a pure state is an extremal point of the convex set of density operators.  On the other hand, evidential certainty requires $\rho_{0} = \ket{0}\bra{0}$ and $\rho_{1} = \ket{1}\bra{1}$ because these are the only states that give certainty for the projective measurement $E$.  Therefore, if we want to define a notion of conditioning in quantum theory, then the reflection principle and evidential certainty cannot \emph{both} be definitional properties of conditioning.  We have to make a decision about which one is most important. 

Given that the L\"{u}ders rule is usually thought of as the minimally disturbing state update rule for a quantum measurement, and that it satisfies evidential certainty but not reflection, you might be inclined to reject the reflection principle and try to generalize Aumann's theorem by replacing Bayesian conditioning with L\"{u}ders updating.  This was explored in \cite{Khrennikov15}, but adopting this strategy will clearly not lead to an agreement theorem.  This is because measurements necessarily disturb the state of the system in quantum mechanics, so we have to think of an agent measuring the system as an active intervention rather than a passive observation.  A subsequent measurement can completely invalidate the basis for knowledge gained by previous measurements.  

This is not particularly special to quantum mechanics, but would be true in a classical theory of invasive measurements as well.  For example, suppose that the procedure by which an agent learns that the state of the world $\omega$ is in the element $Q(\omega)$ of their partition $Q$ causes the state to be randomly permuted among all the states in $Q(\omega)$.  The agent's knowledge of $Q(\omega)$ would still be valid but it would destroy the basis for the knowledge that other agents previously gained by measuring the system.  Of course, there is no agreement theorem in this scenario, but this is because we are treating a complex intervention in a physical system as a pure updating of knowledge.  It is not really telling us about the structure of knowledge in the theory.

On the other hand, it has been argued \cite{fuchs2002} that the update rule $\rho \rightarrow \rho_{|\Lambda} = \frac{1}{\mathrm{Tr}[E(\Lambda) \rho]} \rho^{\frac{1}{2}} E(\Lambda) \rho^{\frac{1}{2}}$, which we call \emph{Fuchs' rule}, is a better quantum analogue of Bayesian conditioning.  While Fuchs' rule does not apply to the conventional quantum state of a system that is directly measured, it does apply in other causal scenarios that arguably have a better claim to be pure acquisition of knowledge.  For example, suppose that two systems $A$ and $B$ are in a joint state $\rho_{AB}$ on a tensor product Hilbert space $\mathcal{H}_A \otimes \mathcal{H}_B$.  If we measure a POVM $E:\Sigma \rightarrow \mathcal{L}(\mathcal{H}_A)$ on system $A$ then this can be thought of as an indirect acquisition of information about $B$, and it can be shown that the appropriate way to update the reduced state of system $B$ upon learning te event $\Lambda$ is $\rho_B \rightarrow \frac{1}{\mathrm{Tr}[F(\Lambda) \rho]} \rho_B^{\frac{1}{2}} F(\Lambda) \rho_B^{\frac{1}{2}}$, where $F:\Sigma \rightarrow \mathcal{L}(\mathcal{H}_B)$ is a POVM on system $B$ that is determined by $E$ and the structure of the correlations in $\rho_{AB}$.  So, if we think of measuring $E$ on $A$ as an indirect measurement of $F$ on $B$, then the state of $B$ updates according to Fuchs' rule.  Since this measurement does not involve any interaction with $B$, which could be located arbitrarily far away from $A$, this type of measurement has a claim to be pure knowledge acquisition about $B$.

Another scenario in which this rule applies is the following.  Alice can prepare a quantum system according to a DOVM $\rho:\Sigma \rightarrow \mathcal{L}(\mathcal{H})$ by first generating a sample from the classical probability measure $\mathbb{P}(\Lambda) = \mathrm{Tr}[\rho(\Lambda)]$ (by flipping coins, rolling dice, drawing cards, etc.)  Then, if she obtains the classical state $\omega$, she prepares the quantum system in the state $\rho_{|\{\omega\}} = \rho(\{\omega\})/\mathrm{Tr}[\rho(\{\omega\})]$.  Suppose Bob knows the details of this procedure, but does not know anything about the classical state that Alice obtained.  Then, Bob would assign the state $\rho = \rho(\Omega)$ to the system.  If Bob subsequently learns that $\omega \in \Lambda$, not by measuring the quantum system but by acquiring information about Alice's classical state, e.g., by talking to Alice, then he should update his state from $\rho$ to $\rho_{|\Lambda}$.  This update satisfies reflection and is in fact just another example of Fuchs' rule via the correspondence between POVMs and DOVMs given in \cref{Prop.FromDOVMtoPOVM}.  Here, too, there is a case to be made that this is a pure acquisition of knowledge because Bob obtains his information without directly interacting with the quantum system.

Finally, suppose that Alice prepares the system as just described, but that Bob acquires information by directly measuring the POVM $E:\Sigma' \rightarrow \mathcal{L}(\mathcal{H})$ on the system instead of just talking to Alice \footnote{The measurable space of Bob's measurement is denoted $(\Omega',\Sigma')$ to distinguish it from Alice's space $(\Omega,\Sigma)$ because the two spaces do not need to be the same.}.  As previously discussed, Fuchs' rule does not apply if Bob wants to update his state in order to predict what will happen to the system after the measurement.  But, as shown in \cite{LS13}, it is the correct rule to use for \emph{retrodicting} the past of the system, e.g., for inferring information about the classical variable that Alice used to decide which quantum state to prepare.   By \cref{Prop.FromDOVMtoPOVM}, Alice's DOVM can be converted into a POVM $F:\Sigma\rightarrow \mathcal{L}(\mathcal{H})$, which in this case we call the \emph{retrodictive POVM}.  Before he makes his measurement, Bob can calculate the probability that Alice's classical state is in $\Lambda$ by the formula $\mathbb{P}(\Lambda) = \mathbb{Tr}[\rho(\Lambda)] = \mathbb{Tr}[F(\Lambda) \rho]$, where $\rho = \rho(\Omega)$ is Bob's \emph{retrodictive state}.  If Bob learns that the outcome of his mesurement is in the set $\Lambda'$ then he should update the retrodctive state to $\rho_{|\Lambda'} = \frac{1}{\mathrm{Tr}[E(\Lambda') \rho]} \rho^{\frac{1}{2}} E(\Lambda') \rho^{\frac{1}{2}}$ and then update his probability for Alice's classical state to $\mathbb{P}_{\Lambda'}(\Lambda) = \mathrm{Tr}[F(\Lambda) \rho(\Lambda')]$.  The retrodictive state update is another instance of Fuchs' rule and can be viewed as a pure acquisition of knowledge because we are inferring information about the, presumably fixed, past.

The first two scenarios can easily be generalized to GPTs, but the retrodiction example does not hold in all GPTs, as it relies on the equivalence of the predictive and retrodictive formalisms for quantum theory, based on the equivalence between POVMs and DOVMS given in \cref{Prop.FromDOVMtoPOVM} and \cref{Prop.FromDOVMtoPOVM}, which does not generalize to all GPTs.

Since Aumann's theorem is supposed to be about the structure of common knowledge, rather than a series of chaotic disturbances of a system, update rules satisfying reflection are arguably a better match for generalizing agreement theorems than those that prioritize other features, such as evidential certainty.  However, this does not mean that our notion of conditioning is the uniquely correct generalization of conditioning to quantum theory.  Rather, the properties that define classical conditioning cannot all hold simultaneously in quantum theory, so there will be more than one generalization, and which one is appropriate to use depends on the application.  Therefore, showing that some notion of conditioning violates a theorem of classical probability theory does not imply that there is no such theorem in quantum theory.  It might hold for a different notion of conditioning, so you first have to argue that you are using the notion of conditioning that is most appropriate for the problem at hand.  In our opinion, using a notion of conditioning that preserves the relevant results of classical probability theory as far as possible is usually the best thing to do, as it helps us to focus on the necessary differences between classical and quantum as opposed to quirks of one specific notion of conditioning.

%\begin{figure}
%    \centering
%   \includegraphics[scale=0.35]{Fig_Impossibility_Joint_State.png}
%    \caption{Cases where it is not possible to assign a meaningful causal joint state to all regions.}
%   \label{Fig.ImpossibleJointStates}
%\end{figure}

%Secondly, this fact might be hidden in the first reading, but the crucial ingredients we have used in our work derive from the conditional quantum state formalism put forward by the authors of~\cite{LS13,LS14}. In so doing, we are bound to the same restrictions that also affect that formalism. Critically, in ref.~\cite{LS13,LS14}, the authors study situations in which it is not possible to assign a joint quantum state for certain causally connected regions. Just to cite a few, it is impossible to assign a well-defined causal joint state to $(i)$ mixed causal scenarios; $(ii)$ to multiple time steps that are not Markovian; $(iii)$ to pre-and-post selection scenarios; and $(iv)$ to scenarios where one tries to learn simultaneously the outcome of a local measurement and the outcome of a remote, acausally connected measurement - see fig.~\ref{Fig.ImpossibleJointStates}. In our hybrid model, that constraint affects the possible measurable spaces $(\Omega,\Sigma)$ and partitions $\{\Lambda_{j}\}_{j \in J}$ we can consider - recall that in our definition, a DOVM and an SVM are defined as joint states over $\Sigma$.  

%%%%%%%%%%%%%%%%%%%%%%%%%%%%%%%%%%%%%%%
%       Comparing to other works                   %
%%%%%%%%%%%%%%%%%%%%%%%%%%%%%%%%%%%%%%%

\section{Comparison to other works}\label{Sec.Comparison}

Because of its impact on how we should think about collective reasoning, Aumann's theorem has appeared several times in the specialised literature of quantum foundations. Paradoxically enough, the myriad of results indicates there is no consensus---or common knowledge. 

In refs.~\cite{Khrennikov15,KB14}, the authors argue that, in general, quantum agents may evade the agreement theorem, which would show, therefore, a clear cut between quantum and classical strategies of reasoning---although they do investigate necessary conditions in which Aumann's result holds in a quantum-like framework. Their setup differs from ours in two main aspects. First, their quantum-like version of Aumann's theorem is based on operator lattices, very much in the spirit of Pitowski's original works~\cite{Pitowsky89}. The very notions of 'states of the world', 'knowledge' and 'common knowledge' are, therefore, defined in terms of pure states, projectors and the relationship between them. Granted, their generalisation are stated in terms of standard probability distributions, and they use the Born rule to translate back from operators to real numbers, but the core of their epistemic machinery differs from ours. Second, and because of their focus on standard probability, their notion of conditioning is defined via the usual update rule, which differs from ours---see def.~\ref{Def.ConditionalFromDOVM}. Whereas we maintain the underlying epistemic structure intact and generalise the notion of probabilistic assignments, they generalise the underlying structure while retaining the classical assignments. Given the dependency on operator lattices, it is not clear to us how one could generalise the results of~\cite{Khrennikov15,KB14} to GPTs.

Before concluding, there is another generalisation of the agreement theorem that we want to compare our results with: the works of Contreras-Tejada et.al.~\cite{CTGAAP21, BrandenburgerEtAl24}. In those works, the authors prove that the impossibility of agreeing to disagree may constrain classical and quantum theories, potentially separating them from more general probability theories. In~\cite{CTGAAP21}, they argue that an extremally non-signalling box, the PR-box, allow for agents to agree to disagree on common certainty. To do so, they first reframe Aumann's original formulation into the usual black-box correlation scenario~\cite{NCPSW14}. But this reframing comes with a cost. Although one can easily construct a non-signalling correlation from a knowledge model (with a probability function over it), for the converse of this construction to work for all non-signalling boxes, one should allow quasi-probability measures over the knowledge model. In other words, if we start from correlation scenarios, to translate back to Aumann's original formulation, we should consider negative probabilities---particularly in~\cite{BrandenburgerEtAl24}, where the authors use signalled probability measures as their starting point. Besides this reframing, they propose a notion of common certainty between observers. Their definition mimics but differs from the original notion of common knowledge. 

In a bipartite scenario, where Alice has access to $\mathcal{X}$ inputs and $\mathcal{A}$ outputs and Bob has $\mathcal{Y}$ inputs and $\mathcal{B}$ outputs, according to Contrera-Tejadas et. al.~\cite{CTGAAP21}, common certainty at, say, $(a=0,b=0,x=0,y=0)$ of Alice assigning $q_A$ to $F_{B}=\{(1,b,1,y); b \in \mathcal{B} \mbox{ and } y \in \mathcal{Y} \}$ and Bob assigning $q_B$ to $F_{A}=\{(a,1,x,1); a \in \mathcal{A} \mbox{ and } x \in \mathcal{X} \}$ holds true whenever $(a=0,b=0,x=0,y=0) \in A_{n} \cap B_{n}, \forall n \in \mathbb{N}$, where:
\begin{align}
    A_{n}:= \alpha_{n} \times \mathcal{B} \times \mathcal{X} \times \mathcal{Y}   \\
    B_{n}:=  \mathcal{A} \times \beta_{n} \times \mathcal{X} \times \mathcal{Y} 
\end{align}
and 
\begin{align}
   & \alpha_{0}=:\{a \in \mathcal{A}; q_{a}=p(b | a, x=0, y=1)\} \nonumber  \\
   & \beta_{0}=:\{b \in \mathcal{B}; q_{b}=p(a | b, x=1, y=0)\} \\
   & \alpha_{n+1}:=\{ a \in \alpha_{n}; p(B_n | a, x=0, y=0) =1 \}, \forall n \in \mathbb{N} \nonumber  \\
   & \beta_{n+1}:=\{ b \in \beta_{n}; p(A_n | B, x=0, y=0) =1 \}, \forall n \in \mathbb{N}  \nonumber, 
\end{align}
with $p(a,b|x=1,y=1)=0, \mbox{ for every} \,\, a \neq b$, meaning that $F_A$ and $F_B$ are perfectly correlated. Although their definition brings forth a hierarchy reminiscent of Aumann's original common knowledge hierarchy (def.~\ref{Def.CommonKnowledge}), Contrera-Tejada et. al.'s standpoint depends not only on the perfect correlation of $F_A$ and $F_B$, but also on the inputs and outputs chosen by the observers (with their guesses of each other's inputs) and on the fact that they attribute probability one to certain events in the hierarchy---all of which are not required in our framework. 

Finally, in~\cite{CTGAAP21}, the authors also assume that agents describe the outcomes of their experiments with standard probability theory, and consequently, are also bound to use the Born rule in the quantum case, with the conditioning rule left unspecified. Recall that even though we kept Aumann's original knowledge model fixed, with DOVMs and SVMs we have been able to go beyond the paradigm of standard probability theory, and were also able to define a conditional state---see defs.~\ref{Def.ConditionalFromDOVM} and~\ref{Def.ConditionalFromSVM}.

%%%%%%%%%%%%%%%%%%%%%%%%%%%%%%%%%%%%%%%
%       Discussion and Conclusion     %
%%%%%%%%%%%%%%%%%%%%%%%%%%%%%%%%%%%%%
\section{Conclusions}\label{Sec.Conclusion}
 
In this work, we show that Aumann's impossibility of agreeing to disagree is characteristic of any generalised probability theory, as long as we maintain the standard knowledge model formulation unchanged

The reason Aumann's original argument carries over to more general theories stems from: $(i)$ the use of a countably additive probability-like measure defined over the same set of states of the world $\Omega$ and the same sigma-algebra $\Sigma$ which, in turn, defines the underlying knowledge model $(\Omega, Q_1,...,Q_N,\Sigma)$; and $(ii)$ an appropriate definition of conditioning. In this sense, we believe the agreement theorem should be viewed more as a statement about probability theories defined over knowledge models, rather than a criterion for delineating different physical theories.  

We have also commented on the lack of room for communication between the agents in our formulation---and in Aumann's, for that matter. It is clear that communication should play a crucial role in any notion of agreement, and that has been emphasised in contemporary versions of the original theorem~\cite{BrandenburgerEtAl24,Demey14b}. Nonetheless, if we are to focus on the dynamics of epistemic notions in multi-agent scenarios with multiple rounds of private or public announcements, we enter the realm of dynamic epistemic logic. Although a classical reformulation of Aumann's result has already been investigated within this logical framework, a generalisation, like the one we did here, could confirm what we mentioned before, that the agreement result is just a statement about probability theories defined over knowledge models, not a separation criterion for physical theories. 

Even though we have extended the agreement theorem in the direction of generalised probabilistic descriptions, there is yet another possible path to generalise Aumann's theorem. One could have started from the concept of (incompatible) physical operations and, consequently, in the sense of Randall and Foulis, substituted both the knowledge model and the probabilistic assignments to something more akin to a test space~\cite{Wilce24}. It may be the case that the overlap between the physical operations can allow for the possibility of agreeing to disagree. This alternative path to generalise the agreement theorem should be thoroughly investigated in the future.

Finally, we would like to highlight a potential connection between common knowledge and the emergence of an objective reality. Although this is highly speculative, we have mentioned \textit{en passant} this possibility in the introduction, and we want to conclude our contribution by emphasising this point again. The lesson learned from Wigner's paradox, mainly as discussed in refs.~\cite{Cavalcanti2021, BongEtAl20}, is that, in general, there is a discrepancy between Wigner's external perspective and his Friend's internal perspective. The asymmetry between the two agents results in Wigner's impossibility of assigning a definite outcome for events inside the Friend's laboratory, which would lead us to abandon either Local Action or Absoluteness of Events. In ref.~\cite{Cavalcanti2021}, the author hints about a `law of thought' that might reconcile the two perspectives. We suggest that common knowledge could be that law of thought the author was looking for. Common knowledge would be the mechanism responsible for the emergence of a collective notion of objective reality, a reality where each and every agent would agree on the descriptions of events. A first step in addressing this point was taken by the authors of ref.~\cite{PoderiniEtAl23}, who discussed a notion of collective objectivity (commonly agreed-upon reality) in connection with Quantum Darwinism.

\begin{acknowledgments}
Cristhiano Duarte is undoubtedly grateful to Eric Cavalcanti, whose discussions emphasised that Aumann's theorem lacked a proper communication dimension. CD also thanks the hospitality of the Institute for Quantum Studies at Chapman University. ML was supported, in part, by the Fetzer Franklin Fund of the John E. Fetzer Memorial Trust.  This research was also supported by the Fetzer Franklin Fund of the John E.\ Fetzer Memorial Trust and by grant number FQXi-RFP-IPW-1905 from the Foundational Questions Institute and Fetzer Franklin Fund, a donor-advised fund of Silicon Valley Community Foundation. This work was supported by CNPq through a grant from the Conhecimento Brasil Program (Linha 1 and Linha 2). 
 
\end{acknowledgments}

%Include bibliography
\bibliography{list_of_references}

%Appendices

\onecolumngrid

\appendix

%\section{A brief introduction to GPTs}\label{App.GPTs}

\end{document}